\theoremstyle{plain}
\newtheorem{thm}{Theorem}[section]
\newtheorem{lem}[thm]{Lemma}
\newtheorem{cor}[thm]{Corollary}
\theoremstyle{definition}
\newtheorem{exmp}[thm]{Example}
\newcommand{\Rmnum}[1]{\expandafter\@slowromancap\romannumeral #1@}
\numberwithin{equation}{section} \errorcontextlines=0
\begin{document}
\title{Weighted monogamy and polygamy relations
}
\author{Yue Cao}
\address{School of Mathematics, South China University of Technology,
Guangzhou, Guangdong 510640, China}
\email{434406296@qq.com}
\author{Naihuan Jing*}
\address{Department of Mathematics, North Carolina State University, Raleigh, NC 27695, USA}
\email{jing@ncsu.edu}
\author{Yiling Wang}
\address{Department of Mathematics, North Carolina State University, Raleigh, NC 27695, USA}
\email{ywang327@ncsu.edu}
\subjclass[2010]{Primary: 85; Secondary:}\keywords{Monogamy, polygamy, concurrence, assisted entanglement}
\thanks{$*$Corresponding author: jing@ncsu.edu}
\maketitle
\begin{abstract}
This research offers a comprehensive approach to strengthening both monogamous and polygamous relationships within the context of quantum correlations in multipartite quantum systems. We pre\-sent {\it the most stringent} bounds for both monogamy and polygamy in multipartite systems compared to recently established relations. We show that whenever a bound is given (named it monogamy or polygamy), our bound indexed by some parameter $s$ will always be stronger than the given bound. The study includes detailed examples, highlighting that our findings exhibit greater strength across all existing cases in comparison.
\end{abstract}

\section{\textbf{Introduction}}
Quantum entanglement holds great significance in the realm of quantum information processing. Unlike classical correlation, where information can be freely shared among multiple parties, quantum entanglement imposes constraints in multipartite systems. In these quantum systems, the entanglement shared between one subsystem and another restricts the degree to which it can be entangled with additional systems \cite{ASI}. This limitation is referred to as the monogamy of quantum entanglement.

The concept of monogamy in quantum entanglement has various crucial applications in physics, particularly in the field of quantum information theory. It underscores the idea that entanglement is not distributable without limitations, leading to important implications for the manipulation and transmission of quantum information within multipartite quantum systems \cite{AMG,P}.

The first monogamy relation was discovered by Coffman, Kundu, and Wootters \cite{CKW} for concurrence in three-qubit states $\rho_{ABC}$:
\begin{equation}\label{e:mono1}
C^{2}(\rho_{A|BC})\geq C^{2}(\rho_{AB})+C^{2}(\rho_{AC}),
\end{equation}
where $\rho_{AB}$ and $\rho_{AC}$ are the reduced density matrices of $\rho_{ABC}$.
Osborne and Verstraete \cite{OV} generalized it to
$N$-qubit states: $$
C^{2}(\rho_{A|B_{1}\cdots B_{N-1}})\geq \sum_{i=1}^{N-1} C^{2}(\rho_{AB_{i}}).$$
Since then, the monogamy of different entanglements has been widely studied \cite{G, JHC,JLLF,ZF, KPSS, KDS,OF}. The generalized monogamy relation for arbitrary dimensional tripartite states was also established \cite{OV,BYW,SPSS}.

Polygamy of entanglement can be viewed as a dual form of monogamy. Gour et al observed that the assisted entanglement \cite{GMS} obeys:
\begin{equation}\label{e:poly1}
C^{2}_{a}(\rho_{A|BC})\leq C^{2}_{a}(\rho_{AB})+C^{2}_{a}(\rho_{AC}),
\end{equation}
and for $N$-qubit states \cite{GBS}:
$
C^{2}_{a}(\rho_{A|B_{1}\cdots B_{N-1}})\leq \sum_{i=1}^{N-1} C^{2}_{a}(\rho_{AB_{i}}).$
General poly\-gamy inequalities of multipartite entanglement were also proposed in \cite{K1,K2, JF} for entanglement of assistance.

The generalized monogamy and polygamy relations have an important feature of transitivity. In this regard, the concerned measure also satisfies the monogamy relations for other powers of the measure.  The authors \cite{JFQ} provided a class of monogamy relations of the $\alpha$th $(0\leq\alpha\leq\gamma,\gamma\geq2)$ and polygamy relation for the $\beta$th $(\beta\geq\delta,0\leq\delta\leq1)$ powers for any quantum correlation. Applying the monogamy relations in \cite{JFQ} to quantum correlations like squared convex-roof extended negativity, entanglement of formation, and concurrence one can get tighter monogamy inequalities than those given in \cite{ZF}. In \cite{ZJZ1,ZJZ2}, the authors gave another set of monogamy relations for $(0\leq\alpha\leq\frac{\gamma}{2},\gamma\geq2)$ and polygamy relations for $(\beta\geq\delta,0\leq\delta\leq1)$, the bound is stronger than \cite{JFQ} in the monogamy case but weaker in polygamy case. Recently, the authors \cite{ZLJM} rigorously proved that their monogamy and polygamy are tighter than those given in \cite{JFQ,ZJZ1,ZJZ2}.
 In \cite{LSF, YCFW, TZJF} another type of monogamy relations has also been proposed.
This raises the question of whether there are ways to improve the monogamy and polygamy relations at the same time.

In this study, we will give tighter
monogamy and polygamy relations for the concerned measure by weighted functions. Our finding points out that the weighted monogamy and polygamy relations can achieve significantly better results than previous results using ordinary means of sharpening the inequality. We show that \eqref{e:mono1}
and \eqref{e:poly1} can be simultaneously strengthened by the weighted monogamy and polygamy relations respectively:
\begin{align}\label{e:poly3}
C^{\alpha}(\rho_{A|BC})\geq w_1^{1/\beta}C^{\alpha}(\rho_{AB})+w_2^{1/\beta}C^{\alpha}(\rho_{AC}), \\
C^{\alpha}_{a}(\rho'_{A|BC})\leq w_1^{1/\beta}C^{\alpha}_{a}(\rho'_{AB})+w_2^{1/\beta}C^{\alpha}_{a}(\rho'_{AC}),
\end{align}
where the weights satisfy $\frac{1}{w_1^{1/\alpha}} + \frac{1}{w_2^{1/\alpha}}=1 $ and $\frac{1}{\beta} = \frac{1}{\gamma} - \frac{1}{\alpha} (\gamma \in \mathbb{R}, \gamma \geq 2)$, and $\rho$ and $\rho'$ are two measurements.

Our new method puts both monogamy and polygamy relations in the same setting and we show that the equilibrium occurs exactly at the measurement of the multipartite system.
Our study gives tighter bounds than some of the recent relations
\cite{JFQ,ZJZ1,ZJZ2,ZLJM}. Especially, we show that the weighted monogamy and polygamy are significantly better than the usual monogamy and polygamy (see detailed examples). This reflects the same phenomenon observed by mathematicians working in analysis.

Moreover, we remark that our parameterized bounds (by parameter $s$) in both monogamy and polygamy relations are in fact the {\it strongest} in the sense that whenever another set of bounds is given under the same condition, one can select the parameter $s$ such that the corresponding bounds are tighter than those given.

\section{\textbf{Tighter monogamy/polygamy relations}}

Let $\rho=\rho_{ABC}$ be a state over the Hilbert space $\mathcal{H}_{A}\bigotimes \mathcal{H}_{B}\bigotimes\mathcal{H}_{C}$, and $\mathcal{E}$
a bipartite entanglement measure of quantum correlation. The tripartite state $\rho$ can also be viewed as a bipartite state
if we view it as a state over $\mathcal{H}_{A}\bigotimes (\mathcal{H}_{B}\bigotimes\mathcal{H}_{C})$. The $\gamma$th-monogamy of the measure $\mathcal{E}$ means the following
relation \cite{KPSS}:
\begin{align}\label{e:ABC}
\mathcal{E}^{\gamma}(\rho_{A|BC})\geq \mathcal{E}^{\gamma}(\rho_{AB})+\mathcal{E}^{\gamma}(\rho_{AC}),
\end{align}
where $\rho_{AB}=\operatorname{Tr}_C(\rho)$ (similarly $\rho_{AC}$) is the reduced density matrix.
The usual monogamy refers to the case $\gamma=1$.


It is known that the general $\gamma$th $(\gamma\in \mathbb{R}$, $\gamma\geq 2)$ monogamy of $\mathcal{E}$
holds \cite{OV,BYW,SPSS} for the $N$-partite state $\rho$
\begin{align}\label{e:AB_{i}}
\mathcal{E}^{\gamma}(\rho_{A|B_{1}B_{2}\cdots B_{N-1}})\geq \sum_{i=1}^{N-1} \mathcal{E}^{\gamma}(\rho_{AB_{i}}).
\end{align}

From now on, if the state $\rho$ is clear from the context, we simply denote that $\mathcal{E}(\rho_{AB_{i}})=\mathcal{E}_{AB_{i}}$, $\mathcal{E}(\rho_{A|B_{1}B_{2}\cdots B_{N-1}})=\mathcal{E}_{A|B_{1}B_{2}\cdots B_{N-1}}$.

We would like to address the following three questions:

1) if some $\alpha$th monogamy holds, is it true that all other $\beta$th monogamy holds, and if so, under what conditions?

2) how to quantify the lower bound of $\mathcal E(\rho)$ using the $\alpha$th monogamy?

3) Are there the strongest and tightest bounds of the monogamy and polygamy relations? If yes, in what sense?

To answer these questions, we consider the following function ($t\geq a\geq 1$ are positive parameters)
$$h(x,y)=(1+\frac{a}{y})^{x-1}+(1+\frac{y}{a})^{x-1}t^{x}$$
defined on $(x, y)\in [0, 1]\times (0, \infty)$. To find extreme points, we compute that
\begin{align}\label{e:der}
\frac{\partial h}{\partial y}=(x-1)(y+a)^{x-2}\left(\frac{-a}{y^{x}}+\frac{t^{x}}{a^{x-1}}\right)
\end{align}
then the critical points lie at either
$y=\frac{a}{t}$ or $x=1$. As a function of $y$ (fixing $x$), $h(x,y)$ is increasing on   $(0,\frac{a}{t}]$ and decreasing on $[\frac{a}{t},\infty)$ so the maximum
$h(x, \frac{a}{t})=(1+t)^x$ for fixed $x\in [0, 1]$.
Therefore for $0 \leq x\leq 1$
\begin{align}\label{e:extreme}
(1+t)^{x}\geq \left(1+\frac{a}{s}\right)^{x-1}+\left(1+\frac{s}{a}\right)^{x-1}t^{x}, \quad \forall s>0.
\end{align}

Similarly, consider the function $h(x, y)$ defined on $[1, \infty)\times (0, \infty)$. For fixed $x\geq 1$, the function $h(x, y)$ is decreasing on $(0,\frac{a}{t}]$ and increasing on $[\frac{a}{t},\infty)$, therefore
the minimum is also given by $h(x, \frac{a}{t})=(1+t)^x$
for arbitrary $s>0$.

Therefore we have shown the following result:
\begin{lem}\label{e:1-1}
Let $a\geq 1$ be a real number. Then for $t\geq a\geq 1$ and $0 \leq x\leq 1$
\begin{align}\label{e:mono1a}
(1+t)^{x}\geq\left(1+\frac{a}{s}\right)^{x-1}+\left(1+\frac{s}{a}\right)^{x-1}t^{x}
\end{align}
for any $s>0$. Similarly for $t\geq a\geq 1$ and $x\geq 1$
\begin{align}\label{e:poly1a}
(1+t)^{x}\leq\left(1+\frac{a}{s}\right)^{x-1}+\left(1+\frac{s}{a}\right)^{x-1}t^{x}
\end{align}
for any $s>0$.
\end{lem}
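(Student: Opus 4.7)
The plan is to recognize both inequalities as single applications of Jensen's inequality for the power function $y\mapsto y^{x}$, which is concave on $(0,\infty)$ when $0\le x\le 1$ and convex there when $x\ge 1$. This sidesteps the explicit differentiation performed in the derivation of \eqref{e:extreme} and handles the two cases by exactly the same line of reasoning.

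First, I would introduce the probability weights $u=s/(s+a)$ and $v=a/(s+a)$, which automatically satisfy $u+v=1$ and $u,v\in(0,1)$ for every $s,a>0$. A direct computation gives $u^{1-x}=(1+a/s)^{x-1}$ and $v^{1-x}=(1+s/a)^{x-1}$, so the right-hand side of each claimed inequality can be rewritten as
\[
(1+a/s)^{x-1}+(1+s/a)^{x-1}t^{x} \;=\; u\cdot(1/u)^{x}+v\cdot(t/v)^{x}.
\]
This exhibits the sum as a weighted average of $(1/u)^{x}$ and $(t/v)^{x}$ with weights $u$ and $v$.

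Next, I would apply Jensen's inequality. For $0\le x\le 1$ the map $y\mapsto y^{x}$ is concave, hence
\[
u\cdot(1/u)^{x}+v\cdot(t/v)^{x} \;\le\; \bigl(u\cdot(1/u)+v\cdot(t/v)\bigr)^{x} \;=\; (1+t)^{x},
\]
which is exactly \eqref{e:mono1a}. For $x\ge 1$ the same function is convex, Jensen's inequality reverses, and \eqref{e:poly1a} follows in the same way. Equality holds in Jensen precisely when $1/u=t/v$, that is $s=a/t$, which matches the critical point the author identifies in \eqref{e:der} via calculus.

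The only real obstacle is spotting the reparametrization that turns the two summands into a convex combination whose barycenter is exactly $1+t$; once the weights $u,v$ are chosen this way, no monotonicity argument, no splitting on the sign of $x-1$, and no case work on the location of the extremum is needed. As a byproduct, the hypothesis $t\ge a\ge 1$ is revealed to be unused in the inequality itself, so the statement is in fact valid for arbitrary $a,t>0$; the restriction $t\ge a\ge 1$ is only relevant to the entanglement-theoretic interpretation in the sections that follow.
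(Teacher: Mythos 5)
Your proof is correct, and it takes a genuinely different route from the paper. The paper treats the right-hand side as a function $h(x,y)=(1+\frac{a}{y})^{x-1}+(1+\frac{y}{a})^{x-1}t^{x}$ of $y$, computes $\frac{\partial h}{\partial y}$, locates the unique critical point at $y=\frac{a}{t}$, and reads off the two inequalities from the monotonicity of $h$ on either side of that point together with the value $h(x,\frac{a}{t})=(1+t)^{x}$. You instead observe that with $u=\frac{s}{s+a}$, $v=\frac{a}{s+a}$ the right-hand side is the weighted average $u\left(\frac{1}{u}\right)^{x}+v\left(\frac{t}{v}\right)^{x}$ whose barycenter is exactly $1+t$, so both inequalities are one application of Jensen for $y\mapsto y^{x}$ (concave for $0\le x\le 1$, convex for $x\ge 1$); your algebra checks out, and your equality condition $s=\frac{a}{t}$ agrees with the paper's critical point. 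What your approach buys: it eliminates the differentiation and the case analysis on the sign of $x-1$, and it makes transparent that the inequality itself holds for all $a,t,s>0$ --- the hypothesis $t\ge a\ge 1$ is indeed only needed later, to guarantee that the interval $[\frac{a}{t},1]$ of ``good'' parameters $s$ is nonempty and to make the comparisons with the $s=1$ bounds of prior work. What the paper's calculus argument buys, and what your version does not immediately replace, is the monotonicity information in $s$ on $(0,\frac{a}{t}]$ and $[\frac{a}{t},\infty)$, which the paper reuses afterwards (e.g.\ in \eqref{e:1-3} and \eqref{e:tighter relation}) to show that every $s\in[\frac{a}{t},1]$ improves on the $s=1$ bound; if you wanted the full strength of what the paper extracts from this lemma you would still need to supply that monotonicity separately.
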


More generally we also have
\begin{lem}\label{l:ineq} Let $p_1\geq \cdots \geq p_N\geq 0$ be $N$ numbers such that $p_{i}\geq a p_{i+1}(i=1,\cdots,N-1)$ for a fixed $a\geq 1$, then for any $s>0$ one has
that
\begin{align}\label{e:1-4}
\left(\sum_{i=1}^{N}p_{i}\right)^{x}\geq \left(1+\frac{a}{s}\right)^{x-1}\sum_{i=1}^{N}\left(\left(1+\frac{s}{a}\right)^{x-1}\right)^{N-i}p_{i}^{x}
\end{align}
for $0\leq x \leq 1$.
\end{lem}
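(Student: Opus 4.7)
The plan is to prove Lemma \ref{l:ineq} by induction on $N$, using Lemma \ref{e:1-1} as both the base case and the engine driving the inductive step. For brevity set $A=(1+a/s)^{x-1}$ and $B=(1+s/a)^{x-1}$ so that the claimed inequality reads
\[
\Bigl(\textstyle\sum_{i=1}^{N}p_i\Bigr)^{x}\;\geq\; A\sum_{i=1}^{N}B^{N-i}p_i^{x}.
\]

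For the base case $N=2$, since $p_1\geq a p_2$ and (assuming $p_2>0$) $t:=p_1/p_2\geq a\geq 1$, I can divide through by $p_2^x$ and apply Lemma \ref{e:1-1} directly, which rearranges to $(p_1+p_2)^{x}\geq A p_2^{x}+B p_1^{x}=A(B p_1^{x}+p_2^{x})$, matching the claim. The degenerate case $p_2=0$ is immediate since both sides reduce to $p_1^{x}$ (using $A,B\geq 0$ and $A B\leq \ldots$; I will verify this trivial limit).

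For the inductive step, assume the inequality holds at $N-1$ and set $Q=\sum_{i=1}^{N-1}p_i$. The key observation is that $Q\geq p_{N-1}\geq a p_N$, so again $Q/p_N\geq a$ and Lemma \ref{e:1-1} applies with $t=Q/p_N$ to yield
\[
(Q+p_N)^{x}\;\geq\; A p_N^{x}+B Q^{x}.
\]
By the induction hypothesis, $Q^{x}\geq A\sum_{i=1}^{N-1}B^{N-1-i}p_i^{x}$. Multiplying by $B$ shifts the exponent to $B^{N-i}$, producing
\[
(Q+p_N)^{x}\;\geq\; A p_N^{x}+A\sum_{i=1}^{N-1}B^{N-i}p_i^{x}
\;=\;A\sum_{i=1}^{N}B^{N-i}p_i^{x},
\]
since the $i=N$ term in the final sum is exactly $B^{0}p_N^{x}=p_N^{x}$. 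This completes the induction.

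I do not expect any real obstacle; the whole argument is a two-step reduction and bookkeeping. The only mildly delicate point is handling the edge case where some $p_i$ vanishes (so that the ratio $Q/p_N$ is not finite), which I would dispatch by a brief continuity argument or by noting that vanishing $p_N$ collapses the claim at level $N$ to the claim at level $N-1$ with a harmless factor of $B^{0}=1$ for the dropped term. The non-trivial content is entirely absorbed into Lemma \ref{e:1-1}.
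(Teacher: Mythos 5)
Your argument is essentially the paper's own proof: induction on $N$, peeling off the smallest term and applying Lemma \ref{e:1-1} with $t=(p_1+\cdots+p_{N-1})/p_N\geq a$, then invoking the inductive hypothesis; the inductive step is correct exactly as you wrote it. The one slip is in your base case: with $A=(1+\frac{a}{s})^{x-1}$ and $B=(1+\frac{s}{a})^{x-1}$, the claimed identity $Ap_2^{x}+Bp_1^{x}=A(Bp_1^{x}+p_2^{x})$ is false — you need the inequality $Bp_1^{x}\geq ABp_1^{x}$, which holds because $A\leq 1$ for $0\leq x\leq 1$ (the same observation the paper uses to start its induction at $N=1$ via $p_1^{x}\geq Ap_1^{x}$), and with that one-line correction (or by starting the induction at $N=1$) your proof is complete.
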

\begin{proof} We use induction on $N$. The case of $N = 1$ holds by $0<\left(1+\frac{a}{s}\right)^{x-1}\leq 1$.
Consider $N$ numbers $p_i\geq 0$ such that $p_i\geq a p_{i+1}$ for a fixed $a\geq 1$.
We can assume $p_{N}\neq0$ otherwise the inequality holds by the inductive hypothesis. Note that $p_{1}+p_{2}+\cdots+p_{N-1}>a{p_{N}}$, so
Lemma \ref{e:1-1} implies that
\begin{align*}
\left(\sum_{i=1}^{N}p_{i}\right)^{x} &=p_{N}^{x}\left(1+\frac{p_{1}+p_{2}+\cdots+p_{N-1}}{p_{N}}\right)^{x}\\
&\geq p_{N}^{x}\left(1+\frac{a}{s}\right)^{x-1}+\left(1+\frac{s}{a}\right)^{x-1}\left(\frac{p_{1}+p_{2}+\cdots+p_{N-1}}{p_{N}}\right)^{x}.
\end{align*}
Using the inductive hypothesis, the above is no less than the following
\begin{align*}
&\geq \left(1+\frac{a}{s}\right)^{x-1}\sum_{i=2}^{N}\left(\left(1+\frac{s}{a}\right)^{x-1}\right)^{N-i}p_{i}^{x}+\left(\left(1+\frac{s}{a}\right)^{x-1}\right)^{N-1}p_{1}^{x}\\
&\geq\left(1+\frac{a}{s}\right)^{x-1}\sum_{i=1}^{N}\left(\left(1+\frac{s}{a}\right)^{x-1}\right)^{N-i}p_{i}^{x}
\end{align*}
where the last inequality has used $0<\left(1+\frac{a}{s}\right)^{x-1}\leq 1$.
\end{proof}

Lemma \ref{e:1-1} and its proof (in particular for $\frac{a}{t}\leq s \leq 1$) implies that
\begin{equation}\label{e:1-3}
\begin{aligned}
(1+t)^{x}&\geq\left(1+\frac{a}{s}\right)^{x-1}+\left(1+\frac{s}{a}\right)^{x-1}t^{x}\\
&\geq(1+a)^{x-1}+\left(1+\frac{1}{a}\right)^{x-1}t^{x}.
\end{aligned}
\end{equation}

As for $N$-partite case,
to use Lemma
\ref{l:ineq}, we let
$$p=max\left\{\frac{ap_{2}}{p_{1}},\frac{ap_{3}}{p_{1}+p_{2}},\cdots,\frac{ap_{N}}{p_{1}+\cdots+p_{N-1}}\right\}.$$

If $p\leq s \leq 1$, then we have
\begin{equation}\label{e:tighter relation}
\begin{aligned}
\left(\sum_{i=1}^{N}p_{i}\right)^{x}&\geq \left(1+\frac{a}{s}\right)^{x-1}\sum_{i=1}^{N}\left(\left(1+\frac{s}{a}\right)^{x-1}\right)^{N-i}p_{i}^{x}\\
&\geq\left(1+a\right)^{x-1}\sum_{i=1}^{N}\left(\left(1+\frac{1}{a}\right)^{x-1}\right)^{N-i}p_{i}^{x}
\end{aligned}
\end{equation}
Naturally, our relations based on Lemma \ref{e:1-1} will outperform those given in \cite[Lem. 2]{ZLJM}, and subsequently also
\cite{JFQ,ZJZ1,ZJZ2}.

\bigskip
Now let's first utilize the lemma for the monogamy relation.
\begin{thm}\label{t:1-1}
Let $\mathcal{E}$ be a bipartite quantum measure satisfying the generalized
monogamy relation \eqref{e:ABC} for the tripartite state $\rho_{ABC}$ with $\gamma\geq 2$.
Suppose $\mathcal{E}_{AB}^{\gamma}\geq a\mathcal{E}_{AC}^{\gamma}$ for some $a\geq 1$, then for any $0\leq \alpha\leq \gamma$ and $s>0$, one has that
\begin{align}\label{e:mono1b}
\mathcal{E}_{A|BC}^{\alpha}\geq\left(1+\frac{a}{s}\right)^{\frac{\alpha}{\gamma}-1}\mathcal{E}_{AC}^{\alpha}+\left(1+\frac{s}{a}\right)^{\frac{\alpha}{\gamma}-1}\mathcal{E}_{AB}^{\alpha}.
\end{align}
Moreover, we claim that the lower bound on the right is the strongest and tightest among all peers.
\end{thm}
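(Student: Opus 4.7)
The plan is to reduce the theorem directly to Lemma~\ref{e:1-1}. Starting from the hypothesis $\mathcal{E}_{A|BC}^{\gamma}\geq \mathcal{E}_{AB}^{\gamma}+\mathcal{E}_{AC}^{\gamma}$ and raising both sides to the power $x:=\alpha/\gamma\in[0,1]$, I would factor $\mathcal{E}_{AC}^{\alpha}$ out from the right-hand side (under the temporary assumption $\mathcal{E}_{AC}>0$) to obtain
$$\mathcal{E}_{A|BC}^{\alpha}\geq \mathcal{E}_{AC}^{\alpha}(1+t)^{x},\qquad t:=\mathcal{E}_{AB}^{\gamma}/\mathcal{E}_{AC}^{\gamma}.$$
The ratio hypothesis $\mathcal{E}_{AB}^{\gamma}\geq a\mathcal{E}_{AC}^{\gamma}$ gives $t\geq a\geq 1$, so Lemma~\ref{e:1-1} applies verbatim and produces
$$\mathcal{E}_{AC}^{\alpha}(1+t)^{x}\geq \left(1+\tfrac{a}{s}\right)^{x-1}\mathcal{E}_{AC}^{\alpha}+\left(1+\tfrac{s}{a}\right)^{x-1}\mathcal{E}_{AC}^{\alpha}t^{x}.$$
Since $\mathcal{E}_{AC}^{\alpha}t^{x}=\mathcal{E}_{AB}^{\alpha}$, recombining the two terms yields exactly \eqref{e:mono1b}.

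The degenerate case $\mathcal{E}_{AC}=0$ should be handled separately but is trivial: the claimed inequality collapses to $\mathcal{E}_{A|BC}^{\alpha}\geq (1+s/a)^{\alpha/\gamma-1}\mathcal{E}_{AB}^{\alpha}$, and this follows from $\mathcal{E}_{A|BC}^{\alpha}\geq \mathcal{E}_{AB}^{\alpha}$ (a consequence of the starting $\gamma$th-monogamy relation) together with $(1+s/a)^{\alpha/\gamma-1}\leq 1$.

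For the claim that the bound is \emph{the strongest and tightest among all peers}, the key is the equality case already visible in the derivation of Lemma~\ref{e:1-1}: the right-hand side of \eqref{e:mono1a}, viewed as a function of $s$, is maximized at $s=a/t$ with maximum value $(1+t)^{x}$. Translating back through $t=\mathcal{E}_{AB}^{\gamma}/\mathcal{E}_{AC}^{\gamma}$, at the optimal choice $s^{\star}=a\mathcal{E}_{AC}^{\gamma}/\mathcal{E}_{AB}^{\gamma}$ our bound becomes $(\mathcal{E}_{AB}^{\gamma}+\mathcal{E}_{AC}^{\gamma})^{\alpha/\gamma}$, which is the sharpest lower bound on $\mathcal{E}_{A|BC}^{\alpha}$ deducible from the starting assumption alone. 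Consequently, any competing valid bound of the form $c_{1}\mathcal{E}_{AC}^{\alpha}+c_{2}\mathcal{E}_{AB}^{\alpha}$ must lie at or below $(\mathcal{E}_{AB}^{\gamma}+\mathcal{E}_{AC}^{\gamma})^{\alpha/\gamma}$, and can therefore be matched or surpassed by our parameterized bound with an appropriate choice of $s$.

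The mechanical derivation is short and is driven entirely by Lemma~\ref{e:1-1}. I expect the only real obstacle to be making the ``among all peers'' phrase precise: the intended meaning is that the family of bounds indexed by $s>0$ sweeps continuously up to the theoretical optimum $(\mathcal{E}_{AB}^{\gamma}+\mathcal{E}_{AC}^{\gamma})^{\alpha/\gamma}$ extracted from the hypothesis, so any rival bound derived from the same data is automatically dominated. Pinning this down cleanly will likely require restricting attention to bounds expressed in the same linear combination of $\mathcal{E}_{AB}^{\alpha}$ and $\mathcal{E}_{AC}^{\alpha}$, or else invoking only the generalized $\gamma$th monogamy assumption as input.
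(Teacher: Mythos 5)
Your proposal is correct and follows essentially the same route as the paper: raise the $\gamma$th-monogamy relation to the power $\alpha/\gamma$, factor out $\mathcal{E}_{AC}^{\alpha}$, set $t=\mathcal{E}_{AB}^{\gamma}/\mathcal{E}_{AC}^{\gamma}\geq a$, and apply Lemma~\ref{e:1-1}; the tightness discussion via the limit $s\to a/t$ recovering $(1+t)^{\alpha/\gamma}$ is also the paper's argument. Your explicit treatment of the degenerate case $\mathcal{E}_{AC}=0$ and your attempt to make ``among all peers'' precise go slightly beyond what the paper writes, but do not change the approach.
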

\begin{proof} It follows from \eqref{e:ABC} that
\begin{align*}
\mathcal{E}_{A|BC}^{\alpha}=(\mathcal{E}_{A|BC}^{\gamma})^{\frac{\alpha}{\gamma}}&\geq (\mathcal{E}^{\gamma}_{AB}+\mathcal{E}^{\gamma}_{AC})^{\frac{\alpha}{\gamma}}\\
&=\mathcal{E}_{AC}^{\alpha}\left(1+\frac{\mathcal{E}^{\gamma}_{AB}}{\mathcal{E}^{\gamma}_{AC}}\right)^{\frac{\alpha}{\gamma}}\\
&\geq\left(1+\frac{a}{s}\right)^{\frac{\alpha}{\gamma}-1}\mathcal{E}_{AC}^{\alpha}+\left(1+\frac{s}{a}\right)^{\frac{\alpha}{\gamma}-1}\mathcal{E}_{AB}^{\alpha}
\end{align*}

To see why our bound is the tightest.
Observe that the right-hand side of \eqref{e:mono1b} produces a family of lower bounds indexed by $s$. Observe that
 \begin{equation}\label{e:reason}
 \lim_{s\to \frac{a}{t}}(1+\frac{a}{s})^{x-1}+(1+\frac{s}{a})^{x-1}t^x=(1+t)^x
 \end{equation}
 and $(1+t)^x$ is the maximum value of $h(x, s)$
of $\mathcal E_{A|BC}$ is given, we can always find a parameter $s\in [\frac{a}{t}, 1]$ such that
 our bound in \eqref{e:mono1b} is no less than the given one. In other words, the bounds given above will be the strongest and tightest bounds among all lower bounds.
 \end{proof}

The general monogamy inequality of the $\alpha$th power of the quantum measure for $N$-qubit quantum states is given by the following theorem.
\begin{thm} \label{t:1-2}
Let $\mathcal{E}$ be a bipartite quantum measure satisfying the general $\gamma$th-monogamy relation \eqref{e:AB_{i}} for $\gamma\geq 2$ and $\rho_{AB_{1}\cdots B_{N-1}}$ be any $N$-qubit quantum states. Arrange \{$\mathcal{E}_{i}=\mathcal{E}_{AB_{j}}|i,j=1,\cdots,N-1\}$ in descending order.  If $a\geq1$ and  $\mathcal{E}_{i}^{\gamma}\geq a\mathcal{E}_{i+1}^{\gamma}$ for $i=1,\cdots,N-2$, then
\begin{align}
\mathcal{E}_{A|B_{1}\cdots B_{N-1}}^{\alpha}\geq \left(1+\frac{a}{s}\right)^{\frac{\alpha}{\gamma}-1}\sum_{i=1}^{N-1}\left(\left(1+\frac{s}{a}\right)^{\frac{\alpha}{\gamma}-1}\right)^{N-1-i}\mathcal{E}_{i}^{\alpha}
\end{align}
for $0\leq \alpha\leq \gamma$ and arbitrary $s>0$.
\end{thm}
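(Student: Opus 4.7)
The plan is to derive this as a direct consequence of the generalized $\gamma$th-monogamy hypothesis \eqref{e:AB_{i}} together with the combinatorial inequality in Lemma \ref{l:ineq}, applied to the sequence $p_i = \mathcal{E}_i^{\gamma}$.

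First I would invoke \eqref{e:AB_{i}} to obtain
$$\mathcal{E}_{A|B_{1}\cdots B_{N-1}}^{\gamma}\;\geq\;\sum_{i=1}^{N-1}\mathcal{E}_{i}^{\gamma}.$$
Since $0\leq \alpha/\gamma\leq 1$ and the function $t\mapsto t^{\alpha/\gamma}$ is non-decreasing on $[0,\infty)$, raising both sides to the power $\alpha/\gamma$ preserves the inequality, giving
$$\mathcal{E}_{A|B_{1}\cdots B_{N-1}}^{\alpha}\;\geq\;\Bigl(\sum_{i=1}^{N-1}\mathcal{E}_{i}^{\gamma}\Bigr)^{\alpha/\gamma}.$$

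Next I would set $p_{i}=\mathcal{E}_{i}^{\gamma}$ for $i=1,\ldots,N-1$ and $x=\alpha/\gamma\in[0,1]$. The descending order of $\{\mathcal{E}_i\}$ together with the hypothesis $\mathcal{E}_{i}^{\gamma}\geq a\,\mathcal{E}_{i+1}^{\gamma}$ translates exactly into $p_{1}\geq\cdots\geq p_{N-1}\geq 0$ with $p_{i}\geq a\,p_{i+1}$, which is precisely the hypothesis of Lemma \ref{l:ineq} (with the running index $N$ of the lemma replaced by $N-1$). Applying that lemma yields
$$\Bigl(\sum_{i=1}^{N-1}p_{i}\Bigr)^{x}\;\geq\;\Bigl(1+\frac{a}{s}\Bigr)^{x-1}\sum_{i=1}^{N-1}\Bigl(\bigl(1+\frac{s}{a}\bigr)^{x-1}\Bigr)^{N-1-i}p_{i}^{x}$$
for any $s>0$.

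Finally I would substitute back $p_{i}^{x}=\mathcal{E}_{i}^{\alpha}$ and $x-1=\alpha/\gamma-1$ and chain this with the first step to conclude the desired bound. The argument is essentially mechanical; the only minor bookkeeping point is the reindexing from $N$ terms (as in Lemma \ref{l:ineq}) to $N-1$ terms, which is why the geometric-type exponent appearing in the statement is $N-1-i$ rather than $N-i$. I do not anticipate any substantive obstacle, since all of the analytic work is already encapsulated in Lemma \ref{l:ineq}.
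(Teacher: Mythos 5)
Your proposal is correct and follows exactly the route the paper intends: the paper's own proof of Theorem \ref{t:1-2} consists of the single remark that it follows from the generalized monogamy relation \eqref{e:AB_{i}} together with inequality \eqref{e:1-4} (Lemma \ref{l:ineq}), which is precisely your chain of raising \eqref{e:AB_{i}} to the power $\alpha/\gamma$ and then applying the lemma to $p_i=\mathcal{E}_i^{\gamma}$ with $x=\alpha/\gamma$. Your write-up merely makes explicit the reindexing from $N$ to $N-1$ terms that the paper leaves implicit.
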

\begin{proof} Similar to Theorem \ref{t:1-1}. It follows from the generalized monogamy relation \eqref{e:AB_{i}} and the inequality \eqref{e:1-4}.
\end{proof}

{\bf Comparison of new monogamy relations with previous works}.
Selecting an appropriate parameter $s$, it follows from \eqref{e:tighter relation} and Theorem \ref{t:1-2} that
\begin{align*}
\mathcal{E}_{A|B_{1}\cdots B_{N-1}}^{\alpha}&\geq \left(1+\frac{a}{s}\right)^{\frac{\alpha}{\gamma}-1}\sum_{i=1}^{N-1}\left(\left(1+\frac{s}{a}\right)^{\frac{\alpha}{\gamma}-1}\right)^{N-1-i}\mathcal{E}_{i}^{\alpha}\\
&\geq \left(1+a\right)^{\frac{\alpha}{\gamma}-1}\sum_{i=1}^{N-1}\left(\left(1+\frac{1}{a}\right)^{\frac{\alpha}{\gamma}-1}\right)^{N-1-i}\mathcal{E}_{i}^{\alpha},
\end{align*}
where the middle term corresponds to our {\it lower bound} and the last term was that of 
\cite[Thm. 2]{ZLJM}.
 Since \cite{ZLJM} provided stronger bounds than \cite{JFQ,ZJZ1,ZJZ2}, any monogamy relations based on Lemma \ref{e:1-1} will produce tighter bounds than those given in \cite{JFQ,ZJZ1,ZJZ2}
 as well.

\bigskip
We have remarked that polygamy relations are really the dual forms of monogamy relations.
It is known that
for arbitrary dimensional tripartite state there exists $0\leq \delta\leq 1$ such that any quantum correlation measure $\mathcal{E}$ satisfies the following polygamy relation \cite{JF}:
\begin{align}\label{e:polygamy-1}
\mathcal{E}^{\delta}(\rho_{A|BC})\leq \mathcal{E}^{\delta}(\rho_{AB})+\mathcal{E}^{\delta}(\rho_{AC})
\end{align}
One can easily derive the following generalized polygamy relation for arbitrary dimensional $N$-qubit states
\begin{align}\label{e:polygamy-2}
\mathcal{E}^{\delta}(\rho_{A|B_{1}\cdots B_{N-1}})\leq \sum_{i=1}^{N-1}\mathcal{E}^{\delta}(\rho_{AB_{i}}).
\end{align}

If we use Lemma \ref{e:1-1} for the polygamy relation, a similar argument as above will immediately give the following result.

\begin{thm}\label{t:2-1}
Let $\mathcal{E}$ be a bipartite quantum measure satisfying the $\delta$th
polygamy relation \eqref{e:polygamy-1} for a tripartite state $\rho_{ABC}$ and $0\leq\delta\leq 1$.
Suppose $\mathcal{E}_{AB}^{\delta}\geq a\mathcal{E}_{AC}^{\delta}$ for some $a\geq1$, then for any $\beta\geq \delta$ and $s>0$, we have that
\begin{align}\label{e:polyb}
\mathcal{E}_{A|BC}^{\beta}\leq\left(1+\frac{a}{s}\right)^{\frac{\beta}{\delta}-1}\mathcal{E}_{AC}^{\beta}+\left(1+\frac{s}{a}\right)^{\frac{\beta}{\delta}-1}\mathcal{E}_{AB}^{\beta}.
\end{align}
\end{thm}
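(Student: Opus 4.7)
The plan is to run the dual of the argument in Theorem \ref{t:1-1}, trading the monogamy half of Lemma \ref{e:1-1} for its polygamy half \eqref{e:poly1a}. Since the polygamy hypothesis \eqref{e:polygamy-1} only gives a $\delta$th-power inequality, I first want to raise it to the $\beta/\delta$ power, and the direction of the resulting inequality hinges on $\beta/\delta\geq 1$, which is exactly the assumption $\beta\geq\delta$ (with $\delta>0$; the degenerate case $\delta=0$ collapses the inequality and can be treated trivially).

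Concretely, I would start from
\begin{align*}
\mathcal{E}_{A|BC}^{\beta}=\bigl(\mathcal{E}_{A|BC}^{\delta}\bigr)^{\beta/\delta}\leq \bigl(\mathcal{E}_{AB}^{\delta}+\mathcal{E}_{AC}^{\delta}\bigr)^{\beta/\delta},
\end{align*}
then factor out $\mathcal{E}_{AC}^{\delta}$ and set $t=\mathcal{E}_{AB}^{\delta}/\mathcal{E}_{AC}^{\delta}$ and $x=\beta/\delta\geq 1$. The hypothesis $\mathcal{E}_{AB}^{\delta}\geq a\mathcal{E}_{AC}^{\delta}$ gives $t\geq a\geq 1$, so the second inequality of Lemma \ref{e:1-1} applies and yields
\begin{align*}
\mathcal{E}_{AC}^{\beta}(1+t)^{x}\leq \mathcal{E}_{AC}^{\beta}\Bigl[\bigl(1+\tfrac{a}{s}\bigr)^{x-1}+\bigl(1+\tfrac{s}{a}\bigr)^{x-1}t^{x}\Bigr],
\end{align*}
and substituting back $t^{x}=\mathcal{E}_{AB}^{\beta}/\mathcal{E}_{AC}^{\beta}$ produces exactly \eqref{e:polyb}.

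There is no real obstacle beyond bookkeeping: the monogamy and polygamy halves of Lemma \ref{e:1-1} are designed as a matched pair, so the only thing that needs attention is to flip every inequality in the proof of Theorem \ref{t:1-1} to match the reversed direction \eqref{e:poly1a}. The one bit of care required is the edge case $\mathcal{E}_{AC}=0$, where the factorization by $\mathcal{E}_{AC}^{\delta}$ breaks down; here the polygamy hypothesis reduces to $\mathcal{E}_{A|BC}^{\beta}\leq \mathcal{E}_{AB}^{\beta}$, and one recovers the stated bound from $(1+s/a)^{\beta/\delta-1}\geq 1$ for $\beta\geq\delta$ and $s>0$. As in Theorem \ref{t:1-1}, the tightness/strongest-bound claim (though not explicitly restated here) would follow from the limiting identity \eqref{e:reason} by choosing $s\in[a/t,1]$ so that the parameterized bound is no weaker than any competing upper bound of the same form.
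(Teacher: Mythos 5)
Your proposal is correct and follows exactly the route the paper intends: the paper gives no written proof for this theorem beyond the remark that ``a similar argument as above will immediately give the following result,'' i.e.\ repeat the proof of Theorem \ref{t:1-1} with the polygamy half \eqref{e:poly1a} of Lemma \ref{e:1-1} in place of \eqref{e:mono1a}, which is precisely what you do. Your extra attention to the degenerate cases $\delta=0$ and $\mathcal{E}_{AC}=0$ goes slightly beyond what the paper records, but it is the same argument.
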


\begin{thm} \label{t:2-2}
Let $\mathcal{E}$ be a bipartite quantum measure satisfying the generalized polygamy relation \eqref{e:polygamy-2} for $0\leq\delta\leq 1$ and $\rho_{AB_{1}\cdots B_{N-1}}$ be any $N$-qubit quantum states. Arrange \{$\mathcal{E}_{i}=\mathcal{E}_{AB_{j}}|i,j=1,\cdots,N-1\}$ in descending order.  If $a\geq1$ and $\mathcal{E}_{i}^{\delta}\geq a\mathcal{E}_{i+1}^{\delta}$ for $i=1,\cdots,N-2$, then
\begin{align}\label{e:2-4}
\mathcal{E}_{A|B_{1}\cdots B_{N-1}}^{\beta}\leq \left(1+\frac{a}{s}\right)^{\frac{\beta}{\delta}-1}\sum_{i=1}^{N-1}\left(\left(1+\frac{s}{a}\right)^{\frac{\beta}{\delta}-1}\right)^{N-1-i}\mathcal{E}_{i}^{\beta}
\end{align}
for $\beta\geq \delta$ and arbitrary $s>0$.
\end{thm}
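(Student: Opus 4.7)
The plan is to mirror the proof of Theorem \ref{t:1-2}, but with every inequality reversed so that the polygamy form of Lemma \ref{e:1-1}, namely \eqref{e:poly1a}, replaces its monogamy counterpart. Since $\beta\geq\delta\geq 0$ forces $\beta/\delta\geq 1$, raising the generalized polygamy inequality \eqref{e:polygamy-2} to the power $\beta/\delta$ preserves the direction and gives
\begin{align*}
\mathcal{E}_{A|B_1\cdots B_{N-1}}^{\beta}
=\bigl(\mathcal{E}_{A|B_1\cdots B_{N-1}}^{\delta}\bigr)^{\beta/\delta}
\leq \left(\sum_{i=1}^{N-1}\mathcal{E}_i^{\delta}\right)^{\beta/\delta}.
\end{align*}
The proof then reduces to bounding the right-hand side via a polygamy analog of Lemma \ref{l:ineq}.

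The key step I would establish is the following dual of Lemma \ref{l:ineq}: if $p_1\geq\cdots\geq p_N\geq 0$ with $p_i\geq a p_{i+1}$ for a fixed $a\geq 1$, then for any $s>0$ and any $x\geq 1$,
\begin{align*}
\left(\sum_{i=1}^{N}p_{i}\right)^{x}\leq \left(1+\frac{a}{s}\right)^{x-1}\sum_{i=1}^{N}\left(\left(1+\frac{s}{a}\right)^{x-1}\right)^{N-i}p_{i}^{x}.
\end{align*}
The proof proceeds by induction on $N$. The base case $N=1$ is immediate from $(1+a/s)^{x-1}\geq 1$ for $x\geq 1$. For the inductive step, assume $p_N\neq 0$ and write $t=(p_1+\cdots+p_{N-1})/p_N$, which satisfies $t\geq a$ since $p_1\geq a p_N$. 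Factor out $p_N^x$ from the left-hand side, apply \eqref{e:poly1a} to $(1+t)^x$, and then invoke the inductive hypothesis on the sum $p_1+\cdots+p_{N-1}$; grouping terms reassembles the right-hand side.

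Substituting $p_i=\mathcal{E}_i^{\delta}$ and $x=\beta/\delta$ into this dual lemma then concludes the proof, since the hypothesis $\mathcal{E}_i^{\delta}\geq a\mathcal{E}_{i+1}^{\delta}$ supplies the required ratio condition and $p_i^x=\mathcal{E}_i^{\beta}$ delivers the correct powers. The only real obstacle is the dual lemma, and this is essentially a verbatim rerun of the induction for Lemma \ref{l:ineq} with \eqref{e:mono1a} replaced by \eqref{e:poly1a} and the bookkeeping inequality $0<(1+a/s)^{x-1}\leq 1$ replaced by $(1+a/s)^{x-1}\geq 1$; the key observation is that passing from $0\leq x\leq 1$ to $x\geq 1$ flips the sign of $x-1$ and hence simultaneously reverses both inequalities used, keeping the induction internally consistent.
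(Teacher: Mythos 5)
Your proposal is correct and follows essentially the route the paper intends: the paper omits the proof, remarking only that it follows from the generalized polygamy relation \eqref{e:polygamy-2} together with a ``similar argument'' to Theorem \ref{t:1-2}, i.e.\ a dual of Lemma \ref{l:ineq} for $x\geq 1$ built on \eqref{e:poly1a}. Your explicit statement and induction for that dual lemma (with $t=(p_1+\cdots+p_{N-1})/p_N\geq a$ and the base case from $(1+a/s)^{x-1}\geq 1$) is exactly the missing ingredient and closes the argument correctly.
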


{\bf Comparison of our polygamy relations with previous works}.
Consider $\frac{a}{t}\leq s \leq 1$ and $x \geq 1$, by Lemma \ref{e:1-1} and its proof, we have $$(1+t)^{x}\leq\left(1+\frac{a}{s}\right)^{x-1}+\left(1+\frac{s}{a}\right)^{x-1}t^{x}\leq(1+a)^{x-1}+\left(1+\frac{1}{a}\right)^{x-1}t^{x}.$$
Therefore our {\it upper} bound of $(1+t)^{x}$ is better than that of \cite[Lem. 3]{ZLJM}. The authors \cite{ZLJM} have proved in detail that their results are stronger than \cite{JFQ,ZJZ1,ZJZ2}, thus any polygamy relations derived from our Lemma \ref{e:1-1} are tighter than those given in \cite{JFQ,ZJZ1,ZJZ2}.

We remark that due to \eqref{e:reason}
 our bound produced by \eqref{e:polyb} will be the tightest. This means that whenever an upper bound
 of $\mathcal E_{A|BC}$ is given, we can always find a parameter $s\in [\frac{a}{t}, 1]$ such that
 our bound in \eqref{e:polyb} is no bigger than the given one.

When using Theorem \ref{t:2-2}, let
$$r=max\{\frac{a\mathcal{E}_{2}^{\delta}}{\mathcal{E}_{1}^{\delta}},\frac{a\mathcal{E}_{3}^{\delta}}{\mathcal{E}_{1}^{\delta}+\mathcal{E}_{2}^{\delta}},\cdots,\frac{a\mathcal{E}_{N-1}^{\delta}}{\mathcal{E}_{1}^{\delta}+\cdots+\mathcal{E}_{N-2}^{\delta}}\}$$
Let $r\leq s \leq 1$, then it follows from
Theorem \ref{t:2-2} that
\begin{align*}
\mathcal{E}_{A|B_{1}\cdots B_{N-1}}^{\beta}&\leq \left(1+\frac{a}{s}\right)^{\frac{\beta}{\delta}-1}\sum_{i=1}^{N-1}\left(\left(1+\frac{s}{a}\right)^{\frac{\beta}{\delta}-1}\right)^{N-1-i}\mathcal{E}_{i}^{\beta}\\
&\leq \left(1+a\right)^{\frac{\beta}{\delta}-1}\sum_{i=1}^{N-1}\left(\left(1+\frac{1}{a}\right)^{\frac{\beta}{\delta}-1}\right)^{N-1-i}\mathcal{E}_{i}^{\beta}
\end{align*}
This means the conclusion \eqref{e:2-4} of our result is better than \cite[Thm. 4]{ZLJM} for $r\leq s \leq 1$. Therefore our bounds produced are tighter polygamy relations than those in \cite{JFQ,ZJZ1,ZJZ2}.

\bigskip
Now let's use examples to show why our bounds in both monogamy and polygamy relations are the strongest among all similar bounds.

The generalized monogamy relations are applicable to any quantum correlation measure, such as negativity, entanglement of formation, concurrence, etc. We take the concurrence to demonstrate
the advantages of our monogamy relations.

Recall that the concurrence of a pure
state $\rho$ on $\mathcal{H}_A \otimes \mathcal{H}_B$ is defined by \cite{AF,RBCGM,U} $C\left(|\psi\rangle_{A B}\right)=\sqrt{2\left[1-\operatorname{Tr}\left(\rho_A^2\right)\right]}$, where $\rho_A$ is the reduced density matrix.
For a mixed state $\rho_{A B}$ the concurrence is given by
$$C\left(\rho_{A B}\right)=\min _{\left\{p_i,\left|\psi_i\right\rangle\right\}} \sum_i p_i C\left(\left|\psi_i\right\rangle\right),
$$
where the minimum is taken over all possible decompositions of $\rho_{A B}=\sum_i p_i\left|\psi_i\right\rangle\left\langle\psi_i\right|$ with $p_i \geqslant 0, \sum_i p_i=1$ and $\left|\psi_i\right\rangle \in \mathcal{H}_A \otimes \mathcal{H}_B$.

\begin{exmp} Let $\rho=|\psi\rangle\langle\psi|$ be the three-qubit state \cite{AACJLT}:
$$
|\psi\rangle=\frac1{2}(|000\rangle+|110\rangle)+\frac{\sqrt{6}}6(e^{i \varphi}|100\rangle+|101\rangle+|111\rangle).
$$
Then the concurrence $C_{A\mid BC}=\frac{\sqrt{21}}{6}, C_{AB }=\frac{\sqrt{6}}{6}, C_{AC}=\frac{1}{2}$. Thus $t=\frac{C_{AC}^{\gamma}}{C_{AB}^{\gamma}}=(\frac{\sqrt{6} }{2})^{\gamma}$. Set $a = 1.05^{\frac{\gamma}{2}}$ and $s = 0.72^{\frac{\gamma}{2}}$ (satisfying $t \geq a \geq 1$ and $s \in [\frac{a}{t},1])$.
By Theorem \ref{t:1-1} our lower bound is
\begin{align*}
Z_1&=\left(1+\frac{a}{s}\right)^{\frac{\alpha}{\gamma}-1} C_{AB}^\alpha+\left(1+\frac{s}{a}\right)^{\frac{\alpha}{\gamma}-1} C_{A C}^\alpha\\
&=\left(1+\frac{ 1.05^{\frac{\gamma}{2}}}{ 0.72^{\frac{\gamma}{2}}}\right)^{\frac{\alpha}{\gamma}-1}\left(\frac{\sqrt{6}}{6}\right)^\alpha+\left(1+\frac{ 0.72^{\frac{\gamma}{2}}}{ 1.05^{\frac{\gamma}{2}}}\right)^{\frac{\alpha}{\gamma}-1}\left(\frac{1}{2}\right)^\alpha
\end{align*}

The following lower bound $Z_2$ comes from \cite{ZLJM}, which is a special case of our bound at $s=1$.
\begin{align*}
    Z_2&=\left(1+a\right)^{\frac{\alpha}{\gamma}-1} C_{AB}^\alpha+\left(1+\frac{1}{a}\right)^{\frac{\alpha}{\gamma}-1} C_{A C}^\alpha\\
&=\left(1+1.05^{\frac{\gamma}{2}}\right)^{\frac{\alpha}{\gamma}-1}\left(\frac{\sqrt{6}}{6}\right)^\alpha+\left(1+\frac{1}{1.05^{\frac{\gamma}{2}}}\right)^{\frac{\alpha}{\gamma}-1}\left(\frac{1}{2}\right)^\alpha
\end{align*}

The lower bound given in \cite{JFQ} is
\begin{equation*}
    Z_{_3}= C_{A B}^{\alpha} + \frac{(1+a)^{\frac{\alpha}{\gamma}} - 1}{a^{\frac{\alpha}{\gamma}}} C_{AC}^{\alpha}
    = \left(\frac{\sqrt{6}}{6}\right)^{\alpha}+ \frac{(1+1.05^{\frac{\gamma}{2}})^{\frac{\alpha}{\gamma}} - 1}{1.05^{\frac{\alpha}{2}}}\left(\frac{1}{2}\right)^{\alpha}
\end{equation*}

Let $p=\frac{1}{2}$, the lower bound given in \cite{ZJZ1,ZJZ2} is
\begin{align*}
    Z_{_4}&=p^{\frac{\alpha}{\gamma}}C_{AB}^{\alpha} + \frac{(1+a)^{\frac{\alpha}{\gamma}} - p^{\frac{\alpha}{\gamma}}}{a^{\frac{\alpha}{\gamma}}}C_{AC}^{\alpha} \\ &= \left(\frac{1}{2}\right)^{\frac{\alpha}{\gamma}}\left(\frac{\sqrt{6}}{6}\right)^{\alpha}+ \frac{(1+1.05^{\frac{\gamma}{2}})^{\frac{\alpha}{\gamma}} - \left(\frac{1}{2}\right)^{\frac{\alpha}{\gamma}}}{1.05^{\frac{\alpha}{2}}}\left(\frac{1}{2}\right)^{\alpha}
\end{align*}

\begin{figure}[H]
\centering
\includegraphics[scale=0.55]{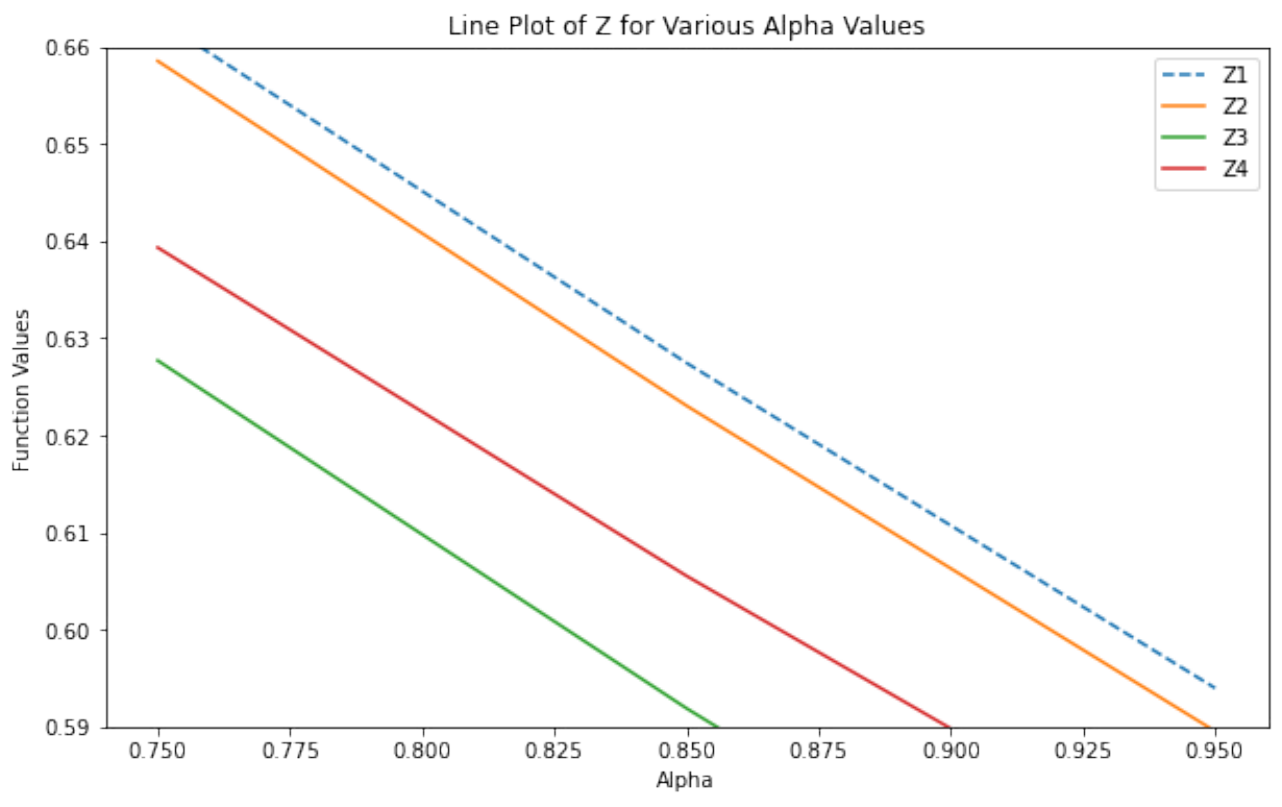}
\end{figure}

Figure 1: The $x$-axis is the exponent $\alpha$, and the $y$-axis shows the lower bounds of the concurrence $C_{A \mid BC}^\alpha$ (for $\gamma=3$). The dashed blue line $Z_1$ represents our lower bound at $s=0.72^{1.5}$ and the orange line $Z_2$ depicts the lower bound from \cite{ZLJM}. The green line $Z_3$ plots the lower bound from \cite{JFQ}. The red line $Z_4$ is the lower bound from \cite{ZJZ1,ZJZ2}. The graph shows our lower bound $Z_1$ is the strongest.
\end{exmp}

\bigskip

Now let's consider the polygamy relation.
The generalized polygamy relations work for many quantum correlation measures such as the concurrence of assistance, the entanglement of assistance, and square of convex-roof extended negativity of assistance (SCRENoA), etc. Correspondingly, a new class of polygamy relations are obtained. We will take the SCRENoA as an example.

The negativity of bipartite state $\rho_{A B}\in \mathcal{H}_{A}\bigotimes \mathcal{H}_{B}$ is defined by \cite{VW}: $$N\left(\rho_{AB}\right)=\left(\left\|\rho_{A B}^{T_{A}}\right\|-1\right) / 2,$$ where $\rho_{AB}^{T_{A}}$ is the partial transposition with respect to the subsystem $A$ and $\|X\|=\operatorname{Tr} \sqrt{X X^{\dagger}}$ is the trace norm of $X$. For simplicity, we define the negativity: $N\left(\rho_{AB}\right)=\left\|\rho_{AB}^{T_{A}}\right\|-1$. The negativity $N\left(\rho_{AB}\right)$ for any bipartite pure state $|\psi\rangle_{AB}$ is given by
$$
N\left(|\psi\rangle_{AB}\right)=2 \sum_{i<j} \sqrt{\lambda_i \lambda_j}=\left(\operatorname{Tr} \sqrt{\rho_{A}}\right)^2-1,
$$
where $\lambda_i$ are the eigenvalues for the reduced density matrix $\rho_A$ of $|\psi\rangle_{AB}$.

For a mixed state $\rho_{AB}$, the square of convex-roof extended negativity (SCREN) is defined by $N_{s c}\left(\rho_{A B}\right)=\left[\min \sum_i p_i N\left(\left|\psi_i\right\rangle_{AB}\right)\right]^2$, where the minimum is taken over all possible pure state decompositions $\left\{p_i,\left|\psi_i\right\rangle_{AB}\right\}$ of $\rho_{AB}$. The SCRENoA is then defined by $N_{s c}^a\left(\rho_{AB}\right)=\left[\max \sum_i p_i N\left(\left|\psi_i\right\rangle_{AB}\right)\right]^2$, where the maximum is taken over all possible pure state decompositions $\left\{p_i,\left|\psi_i\right\rangle_{AB}\right\}$ of $\rho_{AB}$. For convenience, we abbreviate: $N_{a _{AB_i}}=N_{s c}^a\left(\rho_{AB_i}\right)$ and $N_{a _{A \mid B_1 \cdots B_{N-1}}}=N_{s c}^a\left(|\psi\rangle_{A \mid B_1 \cdots B_{N-1}}\right)$.
\begin{cor}\label{c:2-1}
 Let $\delta \in(0,1)$ be the fixed number so that the SCRENoA satisfies the generalized polygamy relation \eqref{e:polygamy-1}. Suppose $N_{a_{ AC}}^{\delta} \geqslant a N_{a _{AB}}^{\delta}$ for $a \geq 1$, then the SCRENoA satisfies
\begin{equation}\label{e:2-2}
N_{a_{A \mid BC}}^{\beta} \leq\left(1+\frac{a}{s}\right)^{\frac{\beta}{\delta}-1} N_{a _{AB}}^{\beta}+\left(1+\frac{s}{a}\right)^{\frac{\beta}{\delta}-1} N_{a _{AC}}^\beta
\end{equation}
for arbitrary $\beta \geq \delta$ and $s>0$.

Moreover, the upper bound on the right is the strongest and tightest among all bounds.
\end{cor}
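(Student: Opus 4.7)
The plan is to derive Corollary \ref{c:2-1} as a direct application of Theorem \ref{t:2-1} specialized to the SCRENoA. By hypothesis, the measure $\mathcal{E} = N_{sc}^a$ satisfies the generalized $\delta$th polygamy relation \eqref{e:polygamy-1}, so Theorem \ref{t:2-1} applies verbatim with $\mathcal{E}$ taken to be the SCRENoA. One small bookkeeping point is that the hypothesis here, $N_{a_{AC}}^{\delta} \geq a N_{a_{AB}}^{\delta}$, designates $N_{a_{AC}}$ as the larger of the two quantities, which is the reverse of the labeling used in Theorem \ref{t:2-1}; interchanging the roles of $AB$ and $AC$ in that theorem produces exactly the inequality \eqref{e:2-2} as stated.

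Concretely, I would proceed in three steps. First, start from \eqref{e:polygamy-1} applied to the SCRENoA, which gives $N_{a_{A|BC}}^{\delta} \leq N_{a_{AB}}^{\delta} + N_{a_{AC}}^{\delta}$, and raise both sides to the power $\beta/\delta \geq 1$ (using monotonicity of $x \mapsto x^{\beta/\delta}$ on $[0,\infty)$). Second, factor out $N_{a_{AB}}^{\beta}$, which is the smaller quantity by the assumption, to rewrite the right-hand side as $N_{a_{AB}}^{\beta}(1+t)^{\beta/\delta}$ where $t = N_{a_{AC}}^{\delta}/N_{a_{AB}}^{\delta} \geq a \geq 1$. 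Third, invoke the polygamy half of Lemma \ref{e:1-1}, namely the inequality \eqref{e:poly1a}, with $x = \beta/\delta$, to obtain the upper bound $\bigl(1+\tfrac{a}{s}\bigr)^{\beta/\delta - 1} + \bigl(1+\tfrac{s}{a}\bigr)^{\beta/\delta - 1} t^{\beta/\delta}$; substituting back the definition of $t$ and multiplying by $N_{a_{AB}}^{\beta}$ yields precisely \eqref{e:2-2}.

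For the ``strongest and tightest'' assertion, the plan is to mirror the tightness argument in Theorem \ref{t:1-1}. The key ingredient is the limiting identity \eqref{e:reason}, which shows that in the regime $x \geq 1$ the parameterized expression $h(x,s)$ attains its minimum value $(1+t)^x$ as $s \to a/t$. Consequently, given any rival upper bound for $N_{a_{A|BC}}^{\beta}$, I can choose the parameter $s$ close enough to $a/t$ within $[a/t, 1]$ so that the corresponding member of our family \eqref{e:2-2} is no larger than the rival bound. Hence the family indexed by $s$ exhausts, up to limits, the tightest upper bound of this form.

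The only non-routine step is the tightness claim, and this is already entirely handled by \eqref{e:reason} together with the continuity of $h(x, \cdot)$; the remainder of the argument is a mechanical transcription of Theorem \ref{t:2-1} with the relabeling $AB \leftrightarrow AC$. No obstacle is expected beyond taking care that the inequality in \eqref{e:poly1a} runs in the correct direction for $x \geq 1$, which is guaranteed since $\beta \geq \delta$ ensures $\beta/\delta \geq 1$.
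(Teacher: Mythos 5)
Your proposal is correct and follows essentially the same route as the paper: the corollary is obtained by specializing Theorem \ref{t:2-1} to the SCRENoA (with the roles of $AB$ and $AC$ interchanged to match the hypothesis $N_{a_{AC}}^{\delta}\geq aN_{a_{AB}}^{\delta}$), and your three-step derivation is exactly the factor-out-and-apply-Lemma-\ref{e:1-1} argument the paper uses to prove Theorems \ref{t:1-1} and \ref{t:2-1}. Your tightness discussion via \eqref{e:reason} likewise mirrors the paper's own (equally informal) justification, so no further comment is needed.
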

By inequality \eqref{e:2-2} and induction, the following result is immediate for an $N$-qubit quantum state $\rho_{AB_{1} \cdots B_{N-1}}$.

\begin{cor} \label{c:2-2}
Let $\delta \in(0,1)$ be the fixed number so that the SCRENoA satisfies the generalized polygamy relation \eqref{e:polygamy-2}. If $\{N_{a_{i}}=N_{a_{AB_{j}}}|i,j=1,\cdots,N-1\}$ is arranged in descending order.  If $N_{a_{i}}^{\delta}\geq aN_{a_{i+1}}^{\delta}$ for all $i=1,\cdots,N-2$ and a fixed $a\geq 1$, then
\begin{align}
N_{a _{A \mid B_1 \ldots B_{N-1}}}^\beta \leq\left(1+\frac{a}{s}\right)^{\frac{\beta}{\delta}-1} \sum_{i=1}^{N-1}\left(\left(1+\frac{s}{a}\right)^{\frac{\beta}{\delta}-1}\right)^{N-1-i} N_{a_{i}}^\beta
\end{align}
for any $\beta \geq \delta$ and $s>0$.
\end{cor}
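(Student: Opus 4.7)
The plan is to recognize that Corollary~\ref{c:2-2} is essentially a direct instantiation of Theorem~\ref{t:2-2} at the specific bipartite measure $\mathcal{E}=N_{sc}^{a}$. Since SCRENoA is assumed to satisfy the generalized polygamy relation \eqref{e:polygamy-2}, every hypothesis of Theorem~\ref{t:2-2} is met, and substituting $N_{a_i}$ for $\mathcal{E}_i$ and $\beta$ for $\alpha$ yields the claimed upper bound verbatim.

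If one prefers to argue directly in the spirit of the remark ``by inequality~\eqref{e:2-2} and induction,'' I would proceed by induction on $N\geq 3$. The base case $N=3$ is precisely Corollary~\ref{c:2-1}. For the inductive step, I would first raise the generalized polygamy relation \eqref{e:polygamy-2} to the power $\beta/\delta\geq 1$ to obtain
\[
N_{a_{A|B_1\cdots B_{N-1}}}^{\beta}\leq\Bigl(\sum_{i=1}^{N-1}N_{a_i}^{\delta}\Bigr)^{\beta/\delta},
\]
and then invoke a dual version of Lemma~\ref{l:ineq}, valid for exponent $x\geq 1$ (with the inequality reversed), applied with $p_i=N_{a_i}^{\delta}$ and $x=\beta/\delta$. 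This expands the $\beta/\delta$-th power of the sum into the weighted sum of $N_{a_i}^{\beta}$ appearing on the right-hand side of the corollary.

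The main technical step is to establish this $x\geq 1$ analog of Lemma~\ref{l:ineq}. I would rerun the induction in the proof of Lemma~\ref{l:ineq} but substitute the second inequality \eqref{e:poly1a} of Lemma~\ref{e:1-1} for the first at every use. The base case reduces to $p_1^{x}\leq (1+a/s)^{x-1}p_1^{x}$, which holds because $(1+a/s)^{x-1}\geq 1$ whenever $x\geq 1$; the inductive step then mirrors the original argument with every inequality flipped, and the telescoping of the geometric factor $((1+s/a)^{x-1})^{N-i}$ goes through unchanged. No new estimates are required beyond those already recorded in Lemma~\ref{e:1-1}.
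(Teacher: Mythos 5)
Your proposal is correct and follows essentially the same route as the paper, which simply remarks that the corollary is immediate from inequality \eqref{e:2-2} and induction (equivalently, it is the specialization of Theorem \ref{t:2-2} to $\mathcal{E}=N_{sc}^{a}$). You in fact supply more detail than the paper does, correctly identifying and verifying the $x\geq 1$ dual of Lemma \ref{l:ineq} (with \eqref{e:poly1a} replacing \eqref{e:mono1a} and the base case resting on $\left(1+\frac{a}{s}\right)^{x-1}\geq 1$) that the paper leaves implicit.
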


We remark that our result
is better than \cite[Cor. 4]{ZLJM} for $r\leq s \leq 1$. Subsequently, our bounds give tighter polygamy relations than those in \cite{JFQ,ZJZ1,ZJZ2}.

\begin{exmp} Let us consider the three-qubit generlized $W$-class state,
$$
|W\rangle_{ABC}=\frac{1}{2}(|100\rangle+|010\rangle)+\frac{\sqrt{2}}{2}|001\rangle.
$$
Then $N_{a_{ A\mid BC}}=\frac{3}{4}, N_{a_{ AB}}=\frac{1}{4}, N_{a_{ AC}}=\frac{1}{2}$. Note that $\beta\geq\delta,0\leq\delta\leq1$, we take $\delta = 0.6$, $a=1.2$. Then $t=\frac{N_{a_{ AC}}^{\delta}}{N_{a_{ AB}}^{\delta}}=2^{0.6}$.

Taking
$s=\frac{a+t}{2t} \in[\frac{a}{t}, 1]$,
Corollary \ref{c:2-1} says that our upper bound is
\begin{align*}
W_{_1}&=\left(1+\frac{a}{s}\right)^{\frac{\beta}{\delta}-1} N_{a _{AB}}^{\beta}+\left(1+\frac{s}{a}\right)^{\frac{\beta}{\delta}-1} N_{a _{AC}}^{\beta}\\
&=\left(1+\frac{1.2}{s}\right)^{\frac{\beta}{0.6}-1} \left(\frac{1}{4}\right)^{\beta}+\left(1+\frac{s}{1.2}\right)^{\frac{\beta}{0.6}-1} \left(\frac{1}{2}\right)^{\beta}
\end{align*}

The upper bound given in \cite{ZLJM} is
\begin{align*}
W_{_2}&=\left(1+a\right)^{\frac{\beta}{\delta}-1} N_{a _{AB}}^{\beta}+\left(1+\frac{1}{a}\right)^{\frac{\beta}{\delta}-1} N_{a _{AC}}^{\beta}\\
&=\left(1+2^{0.6}\right)^{\frac{\beta}{0.6}-1} \left(\frac{1}{4}\right)^{\beta}+\left(1+(2)^{-0.6}\right)^{\frac{\beta}{0.6}-1} \left(\frac{1}{2}\right)^{\beta}
\end{align*}

The upper bound given in \cite{JFQ} is
\begin{equation*}
    W_{_3}= N_{a _{AB}}^{\beta} + \frac{(1+a)^{\frac{\beta}{\delta}} - 1}{a^{\frac{\beta}{\delta}}} N_{a _{AC}}^{\beta}
    = \left(\frac{1}{4}\right)^{\beta}+ \frac{(1+1.2)^{\frac{\beta}{0.6}} - 1}{1.2^{\frac{\beta}{0.6}}}\left(\frac{1}{2}\right)^{\beta}
\end{equation*}

The upper bound given in \cite{ZJZ1,ZJZ2} is
\begin{align*}
    W_{_4}&=p^{\frac{\beta}{\delta}}N_{a _{AB}}^{\beta} + \frac{(1+a)^{\frac{\beta}{\delta}} - p^{\frac{\beta}{\delta}}}{a^{\frac{\beta}{\delta}}}N_{a _{AC}}^{\beta} \\ &= \left(\frac{1}{2}\right)^{\frac{\beta}{0.6}}\left(\frac{1}{4}\right)^{\beta}+ \frac{(1+1.2)^{\frac{\beta}{0.6}} - \left(\frac{1}{2}\right)^{\frac{\beta}{0.6}}}{1.2^{\frac{\beta}{0.6}}}\left(\frac{1}{2}\right)^{\beta}
\end{align*}

\begin{figure}[H]
\centering
\includegraphics[scale=0.55]{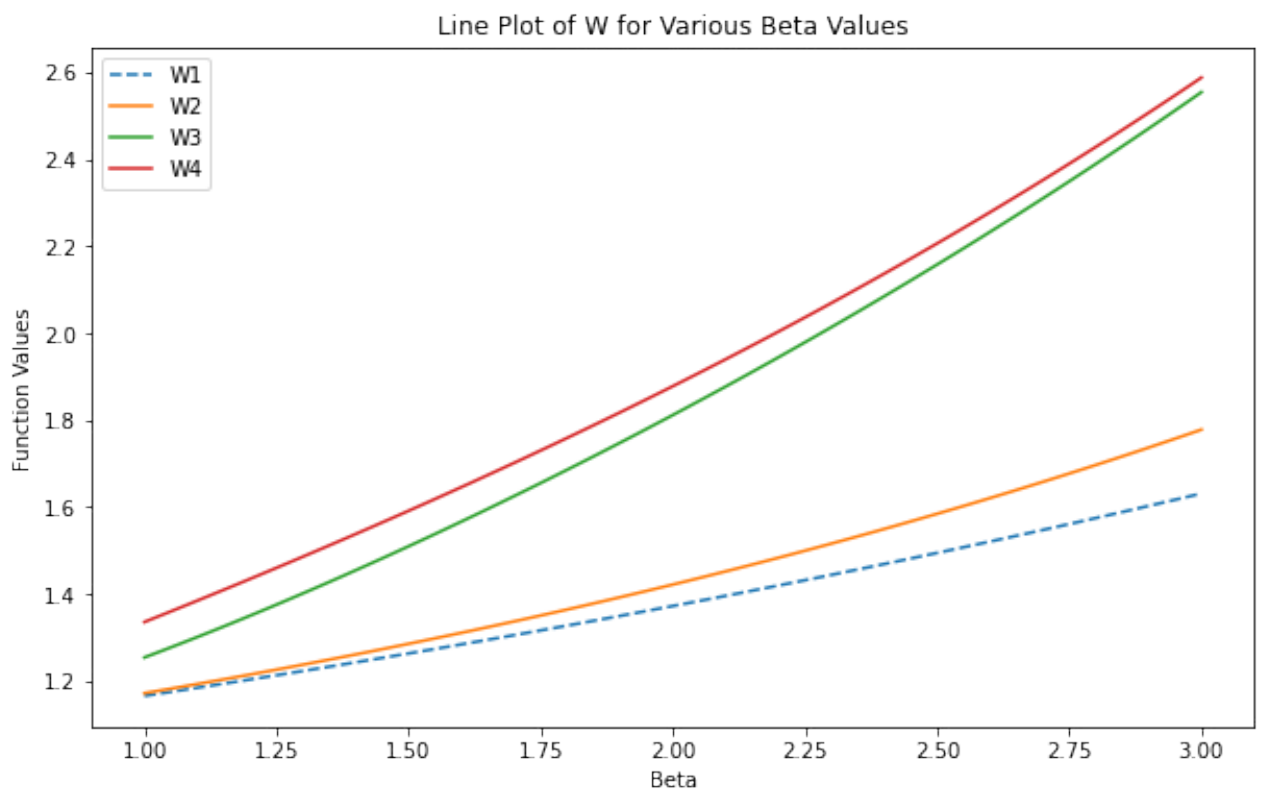}
\end{figure}
Figure 2: The $y$-axis represents SCRENoA for the state $|W\rangle_{ABC}$ as a function of $\beta$ when $\delta = 0.6$. The dashed blue line $W_1$ represents our upper bound ($a=1.2, s=1.2\cdot 2^{-1.6}+0.5$). The orange line $W_2$ plots the upper bound from \cite{ZLJM}. The green line $W_3$ indicates the upper bound from \cite{JFQ}. The red line $W_4$ depicts the upper bound from \cite{ZJZ1,ZJZ2}. It's clear that our bound $W_1$ is the tightest.
\end{exmp}

\section{\textbf{Conclusion}}
Our study on the monogamy and polygamy relations related to quantum correlations in multipartite quantum systems is the first among its peers. By introducing a family of weighted monogamy and polygamy relations indexed by $s$, we have obtained the {\it strongest and tightest bounds} for general
monogamy and polygamy relations for any quantum measurement. We illustrate in detail for parameter $s\in [\frac{a}{t},  1]$, the monogamy and polygamy relations of quantum correlations for the cases $(0\leq\alpha\leq\gamma,\gamma\geq2)$ and $(\beta\geq\delta,0\leq\delta\leq1)$ are tighter than some of the recently available bounds. Taking the concurrence and the SCRENoA (square of convex-roof extended negativity of assistance) as detailed examples, we also verify that our bounds are indeed the {\it tightest} on graphs in both cases.

In summary, we have shown that our (weighted) bounds provide the {\it best and tightest} bounds for the monogamy ($\alpha\leq\gamma$) and polygamy relations ($\beta\geq\gamma$) in their kinds.

\bibliographystyle{plain}

\end{document}